\newtheorem{fact}{{\em Fact}}
\begin{document}

\title{
A Succinct Grammar Compression\thanks{
This study was supported by KAKENHI(23680016,24700140) and JST PRESTO program.
}
}

\author{
Yasuo Tabei\inst{1}, 
Yoshimasa Takabatake\inst{2}, and
Hiroshi Sakamoto\inst{2,3}
}
\institute{
ERATO Minato Project, JST, Japan
\and
Kyushu Institute of Technology, Japan
\and PRESTO, JST, Japan\\
\email{
tabei.y.aa@m.titech.ac.jp,
\{takabatake, hiroshi\}@donald.ai.kyutech.ac.jp}
}

\maketitle

\begin{abstract}
We solve an open problem related to an optimal encoding of 
a {\em straight line program} (SLP), a canonical form of grammar compression
deriving a single string deterministically.
We show that an information-theoretic lower bound for representing an SLP with $n$ symbols 
requires at least $2n+\log n!+o(n)$ bits. 
We then present a succinct representation of an SLP; this representation is asymptotically equivalent to the lower bound.
The space is at most $2n\log\rho(1+o(1))$ bits for $\rho \leq 2\sqrt{n}$, while supporting random access to any production rule of 
an SLP in $O(\log\log n)$ time.
In addition, we present a novel dynamic data structure associating a digram with a unique symbol.
Such a data structure is called a {\em naming function} and has been implemented using a hash table that has a space-time tradeoff.
Thus, the memory space is mainly occupied by the hash table during the development of production rules.
Alternatively, we build a dynamic data structure for the naming function by leveraging the idea behind the {\em wavelet tree}.
The space is strictly bounded by $2n\log n(1+o(1))$ bits, while supporting $O(\log n)$ query and update time.
\end{abstract}

\section{Introduction}
Grammar compression has been an active research area since at least the seventies.
The problem consists of two phases: 
(i) building the smallest\footnote{This is almost equal to minimizing the number of variables in $G$.} 
context-free grammar (CFG) generating an input string uniquely 
and (ii) encoding an obtained CFG as compactly as possible. 

The phase (i) is known as an NP-hard problem which can not be approximated 
within a constant factor~\cite{Lehman-Shelat02}.
Therefore, many researchers have made considerable efforts to design grammar compressions 
achieving better approximation results in the last decade.
Charikar et al.~\cite{Charikar05} and Rytter~\cite{Rytter03} independently 
proposed the first $O(\log\frac{u}{g})$-approximation algorithms based on balanced 
grammar construction for the length $u$ of a string and the size $g$ of the smallest CFG. 
Later, Sakamoto~\cite{Sakamoto05} also developed an $O(\log\frac{u}{g})$-approximation algorithm 
based on an idea called pairwise comparison. 
In particular, Lehman~\cite{Lehman-Shelat02} proved that LZ77~\cite{LZ77} 
achieved the best approximation of $O(\log{n})$ under the condition of an unlimited window size.
Since the minimum {\em addition chain} problem is a special case of the problem 
of finding the smallest CFG~\cite{LehmanPhD}, modifying the approximation algorithms proposed so far is a difficult problem. 
Thus, the problem of grammar compression is pressing in the phase (ii). 

A straight line program (SLP) is a canonical form of a CFG, 
and has been used in many grammar compression algorithms~\cite{LZ78,Larsson00,LZ77,Apostolico00,ESP}.
The production rules in SLPs are in Chomsky normal form 
where the right hand side of a production rule in CFGs is a {\em digram}: a pair of symbols.
Thus, if $n$ symbols are stored in an array called a {\em phrase dictionary} 
consisting of $2n$ fixed-length codes each of which is represented by $\log{n}$ bits, 
the memory of the dictionary is $2n \log{n}$ bits, resulting in the memory for storing an input string usually being exceeded.
Although directly addressable codes achieving entropy bounds on strings whose memory consumption is the same as that of the fixed-length codes in the worst case have been presented~\cite{ferragina2007simple,sadakane2006squeezing,gonzalez2006statistical}, 
there are no codes that achieve an information-theoretic lower bound of storing an SLP in a phrase dictionary.
Since a nontrivial information-theoretic lower bound of directly addressable codes for a phrase dictionary remains unknown, 
establishing the lower bound and developing novel codes for optimally representing an SLP are challenges.

We present an optimal and directly addressable SLP within a strictly bounded memory close to 
the amount of a plain representation of the phrase dictionary.
We first give an information-theoretic lower bound on the problem of encoding an SLP, 
which has been unknown thus far. 
Let $C$ be a class of objects.
Representing an object $c \in C$ requires at least $\log|C|$ bits.
A representation of $c$ is succinct if it requires at most $\log{|C|}(1+o(1))$ bits.
Considering the facts and the characteristics of SLPs indicated in \cite{ESP}, 
one can predict that the lower bound for the class of SLPs with $n$ symbols 
would be between $2n$ and $4n+\log{n!}$. 
By leveraging this prediction, we derive that a lower bound of bits to represent SLPs is $2n+\log{n!}$.

We then present an almost optimal encoding of SLPs based on
{\em monotonic subsequence decomposition} of a sequence.
Any permutation of $[1,n]$ is decomposable into at most $\rho \leq 2\sqrt{n}$
monotonic subsequences in $O(n^{1.5})$ time~\cite{Yehuda1998} and
there is a $1.71$-approximation\footnote{Minimizing $\rho$ is NP-hard.} 
algorithm in $O(n^3)$ time~\cite{Fomin2002}. 
While the previous encoding method for SLPs presented in~\cite{Takabatake2012} 
is also based on the decomposition, the size is not asymptotically equal
to the lower bound when $\rho \simeq \sqrt{n}$.
We improve the data structure by using the {\em wavelet tree} (WT)~\cite{Grossi03} and its improved results~\cite{Jeremy2010,Golynski2006} such that
our novel data structure achieves the smaller bound of $\min\{2n+n\log n +o(1),2n\log\rho(1+o(1))\}$ 
bits for any SLP with $n$ symbols while supporting $O(\log\log\rho)$ access time.
Our method is applicable to any types of algorithm generating SLPs
including Re-Pair~\cite{Larsson00} and an online algorithm called LCA~\cite{Maruyama2012}.
Barbay et al.~\cite{Barbay2009} presented a succinct representation of a sequence using the monotonic subsequence decomposition. 
Their method uses the representation of an integer function built on a succinct representation of integer ranges. 
Its size is estimated to be the {\em degree entropy} of an ordered tree~\cite{Jansson2012-JCSS}.

Another contribution of this paper is to present a dynamic data structure for checking whether or not a production rule in a CFG has been generated in execution. 
Such a data structure is called a {\em naming function}, and is also necessary for practical grammar compressions. 
When the set of symbols is static, we can construct a perfect hash as a naming function in linear time, which achieves 
an amount of space within around a factor of 2 from the information-theoretical minimum~\cite{Botelho2007}.
However, variables of SLPs are generated step by step in grammar compression.
While the function can be dynamically computed by a randomization~\cite{Karp87} or
a deterministic solution~\cite{Karp72} in $O(1)$ time and linear space,
a hidden constant in the required space was not clear.
We present a dynamic data structure to compute function values in $O(\log n)$ query time
and update time. The space is strictly bounded by $2n\log n(1+o(1))$ bits.

\section{Preliminaries}

\subsection{Grammar compression}
For a finite set $C$, $|C|$ denotes its cardinality.
{\em Alphabet} $\Sigma$ is a finite set of letters and $\sigma=|\Sigma|$ is a constant. 
$\cal X$ is a recursively enumerable set 
of {\em variables} with $\Sigma\cap \cal{X}=\emptyset$.
A sequence of symbols from $\Sigma\cup \cal{X}$ is called a string.
The set of all possible strings from $\Sigma$ is denoted by $\Sigma^*$.
For a string $S$, the expressions $|S|$, $S[i]$, and $S[i,j]$ 
denote the length of $S$, the $i$-th symbol of $S$,
and the substring of $S$ from $S[i]$ to $S[j]$, respectively.
Let $[S]$ be the set of symbols composing $S$.
A string of length two is called a {\em digram}.

A CFG is represented by ${\cal G}=(\Sigma,V,P,X_s)$ 
where $V$ is a finite subset of $\cal X$, $P$ is a finite subset of 
$V\times (V\cup\cal{X})^*$, and $X_s\in V$.
A member of $P$ is called a production rule and $X_s$ is called the start symbol.
The set of strings in $\Sigma^*$ derived from $X_s$ by ${\cal G}$ is 
denoted by $L({\cal G})$.

A CFG ${\cal G}$ is called {\em admissible} if exactly one $X\to \alpha\in P$ exists and $|L({\cal G})|=1$.
An admissible ${\cal G}$ deriving $S$ is called a grammar compression of $S$ for any $X\in V$.

We consider only the case $|\alpha|=2$ for any production rule $X\to \alpha$
because any grammar compression with $n$ variables can be transformed into 
such a restricted CFG with at most $2n$ variables.
Moreover, this restriction is useful for practical applications of compression algorithms, e.g.,
LZ78~\cite{LZ78}, REPAIR~\cite{Larsson00}, and LCA~\cite{Maruyama2012},
and indices, e.g., SLP~\cite{Claude09} and ESP~\cite{Maruyama2011}.

The derivation tree of $G$ is represented by a rooted ordered binary tree
such that internal nodes are labeled by variables in $V$ and
the {\em yields}, i.e., the sequence of labels of leaves is equal to $S$.
In this tree, any internal node $Z\in V$ has 
a left child labeled $X$ and a right child labeled $Y$, which 
corresponds to the production rule $Z\to XY$.

If a CFG is obtained from any other CFG by a permutation 
$\pi:\Sigma\cup V\to \Sigma\cup V$, they are identical to each other
because the string derived from one is transformed to 
that from the other by the renaming.
For example, $P=\{Z\to XY,Y\to ab,X\to aa\}$ and
$P'=\{X\to YZ,Z\to ab,Y\to aa\}$ are identical each other.
On the other hand, they are clearly different from 
$P''=\{Z\to aY,Y\to bX,X\to aa\}$ because their depths are different.
Thus, we assume the following canonical form of CFG called 
{\em straight line program} (SLP).

\begin{definition}\label{SLP}(Karpinsk-Rytter-Shinohara~\cite{SLP})
An SLP is a grammar compression over $\Sigma\cup V$ 
whose production rules are formed by either 
$X_i\to a$ or $X_k\to X_iX_j$, where $a\in\Sigma$ and $1\leq i,j < k\leq |V|$.
\end{definition}

\subsection{Phrase/reverse dictionary}
For a set $P$ of production rules,
a {\em phrase dictionary} $D$ is a data structure for directly accessing
 the phrase $X_iX_j$ for any $X_k\in V$ if $X_k\to X_iX_j \in P$.
Regarding a triple $(k,i,j)$ of positive integers as $X_k\to X_iX_j$, 
we can store the phrase dictionary consisting of $n$ variables in an integer array $D[1,2n]$, 
where $D[2k-1]=D[2k]=0$ if $k$ belongs to an alphabet i.e., $1 \leq k \leq |\Sigma|$. 
$X_i$ and $X_j$ are accessible as $D[2k-1]$ and $D[2k]$ by indices $2k-1$ and $2k$ for $X_k$, 
respectively. 
A plain representation of $D$ using fixed-length codes requires $2n\log n$ bits of space to store 
$n$ production rules.

Reverse dictionary $D^{-1}$ is a data structure for directly accessing the variable $X_k$ 
given $X_iX_j$ for a production rule $X_k\to X_iX_j \in P$. 
Thus, $D^{-1}(X_iX_j)$ returns $X_k$ if $X_k \rightarrow X_iX_j \in P$.
A hash table is a representative data structure for $D^{-1}$ enabling $O(1)$ time access 
and achieving $O(n\log n)$ bits of space. 

\subsection{Rank/select dictionary}
We present a phrase dictionary based on the {\em rank/select dictionary}, 
a data structure for a bit string $B$~\cite{Jacobson89} supporting the following queries:
$\mbox{rank}_c(B,i)$ returns the number of occurrences of $c \in \{0,1\}$ in $B[1,i]$ and 
$\mbox{select}_c(B,i)$ returns the position of the $i$-th occurrence of $c \in \{0, 1\}$ in $B$.
For example, if $B=10110100111$ is given, then $\mbox{rank}_1(S,7)=4$ because the number of $1$s in 
$B[1,7]$ is $4$,
and $\mbox{select}_1(S,5)=9$ because the position of the fifth $1$ in $B$ is $9$.
Although naive approaches require the $O(|B|)$ time to compute a rank, 
several data structures with only the $|B|+o(|B|)$ bit storage to 
achieve $O(1)$ time~\cite{Navarro12,Okanohara07} have been presented. 
Most methods compute a select query by a binary search on a bit string $B$ in $O(\log |B|)$ time. 
A data structure for computing the select query in $O(1)$ time has also been presented~\cite{Ram02}.

\subsection{Wavelet tree}
A WT is a data structure for a string $S \in \Sigma^*$, and 
it can be used to compute the rank and select queries on a string $S$ over an ordinal alphabet 
in $O(\log{\sigma})$ time and $n\log\sigma(1+o(1))$ bits~\cite{Grossi03}.
Data structures supporting the rank and select queries in $O(\log\log{\sigma})$ time 
with the same space have been proposed~\cite{Golynski2006,Jeremy2010}. 
WT also supports $\mbox{access}(S,i)$ which returns $S[i]$ in $O(\log{\sigma})$ time.
Recently, WT has been extended to support various operations on strings~\cite{Navarro2012-cpm}.

A WT for a sequence $S$ over $\Sigma=\{1,...,\sigma\}$ is a binary tree that 
can be, recursively, presented over a sub-alphabet range $[a,b] \subseteq [1,\sigma]$. 
Let $S_v$ be a sequence represented in a node $v$, and let 
$left(v)$ and $right(v)$ be left and right children of node $v$, respectively. 
The root $v_{root}$ represents $S_{root}=S$ over the alphabet range $[1,\sigma]$.
At each node $v$, $S_{v}$ is split into two subsequences $S_{left(v)}$ consisting of 
the sub-alphabet range $[a,\lfloor \frac{(a+b)}{2} \rfloor]$ for $left(v)$ 
and $S_{right(v)}$ consisting of the sub-alphabet range $[\lfloor \frac{(a+b)}{2} \rfloor + 1, b]$ 
for $right(v)$ where $S_{left(v)}$ and $S_{right(v)}$ keep the order of elements in $S_v$. 
The splitting process repeats until $a=b$. 
Each node $v$ in the binary tree contains a rank/select dictionary on a bit string $B_v$. 
Bit $B_v[k]$ indicates whether $S_v[k]$ should be moved to $left(v)$ or $right(v)$. 
If $B_v[k]=0$, $S_{left(v)}$ contains $S_{v}[k]$. 
If $B_v[k]=1$, $S_{right(v)}$ inherits $S_v[k]$. 
Formally, $B_v[k]$ with an alphabet range $[a,b]$ is defined as:
\[
B_v[k] = \left\{ \begin{array}{ll}
1 & \mbox{if} \ S_v[k] > \lfloor (a+b)/2 \rfloor \\ 0 & \mbox{if} \ S_v[k] \le \lfloor
(a+b)/2 \rfloor \end{array}
\right. .
\]

An example of a WT is shown in Figure~\ref{fig:wavelet_tree_ex}.
In this example, since $S_{root}[2]=4$ belongs to the higher half $[3,4]$ of an alphabet range $[1,4]$ represented in the root; therefore, it is the second element of $S_{root}$ that must go to the right child of the root, $B_{root}[2]=1$ and $S_{right(root)}[2]=S_{root}[2]=4$.

\begin{figure*}[t]
\begin{center}
\includegraphics[width=0.4\textwidth]{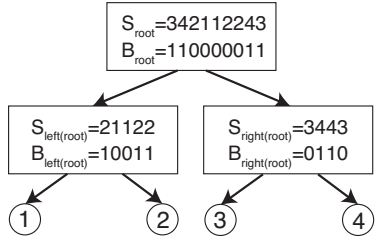}
\end{center}
\vspace{-0.5cm}
\caption{Example of wavelet tree for a sequence $S=342112243$ over an alphabet $\{1,2,3,4\}$.}
\label{fig:wavelet_tree_ex}
\end{figure*}

\section{Succinct SLP}

\subsection{Information-theoretic lower bound}
In this section, we present a tight lower bound to represent SLPs
having a set of production rules $P$ consisting of $n=|\Sigma \cup V|$ symbols.
Each production rule $Z\to XY \in P$ is considered as two directed edges $(Z,X)$ and $(Z,Y)$, 
the SLP can be seen as a directed acyclic graph (DAG) with a single source and $|\Sigma|$ sinks.
Here, we consider $(Z,X)$ as the left edge and $(Z,Y)$ as the right edge.
In addition, $P$ can be considered as a DAG with the single source and 
with a single sink 
by introducing a super-sink $s$ and drawing directed left and right edges from any sink to $s$~(Figure~\ref{fig:dag}). 
Let ${\cal DAG}(n)$ be the set of all possible $G$s with $n$ nodes and 
${\cal DAG}=\bigcup_{n\to\infty}{\cal DAG}(n)$. 
Since two SLPs are identical if an SLP can be converted to the other SLP by 
a permutation $\pi : \Sigma \cup V \rightarrow \Sigma \cup V$,
the number of different SLPs is $|{\cal DAG}(n)|$.
Any internal node of $G \in {\cal DAG}(n)$ has exactly two (left/right) edges.
Thus, the following fact remarked in~\cite{Maruyama2011} is true.

\begin{figure*}[t]
\begin{center}
\includegraphics[width=0.9\textwidth]{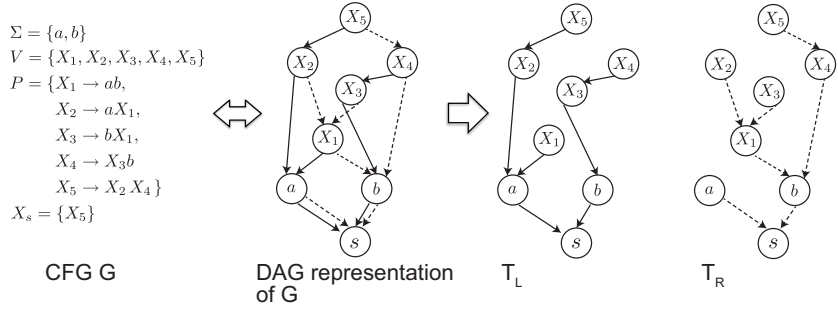}
\end{center}
\vspace{-0.7cm}
\caption{Example of DAG representation of an SLP and its spanning tree decomposition. 
An SLP is represented by a DAG $G$. $G$ is decomposed into the left tree $T_L$ and right tree $T_R$.}
\label{fig:dag}
\end{figure*}

\begin{fact}
An in-branching spanning tree is an ordered tree such that the out-degree of any node 
except the root is exactly one. 
For any in-branching spanning tree of $G$, the graph consisting of the
remaining edges and their adjacent vertices is also an in-branching spanning tree of $G$.
\end{fact}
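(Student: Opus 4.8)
The plan is to exploit two structural features of $G\in{\cal DAG}(n)$: every vertex other than the super-sink $s$ has out-degree exactly two (its left edge and its right edge), while $s$ has out-degree zero; and $G$, being a DAG, contains no directed cycle, hence neither does any subgraph of it. Write $N$ for the number of vertices of $G$. An in-branching spanning tree $T$ (rooted at $s$) has $N-1$ edges and, by definition, contributes exactly one out-edge at each of the $N-1$ vertices different from $s$. First I would record the elementary count: the complementary edge set $R:=E(G)\setminus E(T)$ therefore also has $N-1$ edges, with exactly one out-edge leaving every vertex $\neq s$ and none leaving $s$ (because $2-1=1$ and $0-0=0$).

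Next I would show that $R$ is connected and touches every vertex of $G$. The key observation is that $R$, as a subgraph of the DAG $G$, has no directed cycle; so, starting from any vertex $v\neq s$ and repeatedly following its unique outgoing $R$-edge, the resulting walk never revisits a vertex and must therefore terminate at a vertex with no outgoing edge, which can only be $s$. Hence every vertex of $G$ has a directed $R$-path to $s$. In particular every vertex is incident to an edge of $R$ (so ``the remaining edges and their adjacent vertices'' is indeed all of $V(G)$, and some $R$-edge enters $s$), and the underlying undirected graph of $R$ is connected.

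Finally, a connected graph on $N$ vertices with $N-1$ edges is a tree, so $R$ has no undirected cycle either; combined with the orientation just established — all edges directed toward $s$, and out-degree exactly one at every vertex except $s$ — this is precisely the assertion that $R$ is an in-branching spanning tree of $G$ rooted at $s$, the left/right labels on the edges making it ordered. Applying the same argument to $R$ in place of $T$ recovers the symmetry, giving back the $T_L$/$T_R$ decomposition pictured in Figure~\ref{fig:dag}.

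The step I expect to need the most care is the connectivity argument, precisely because directed acyclicity of $R$ does not by itself forbid cycles in the underlying undirected graph, so one cannot simply invoke ``acyclic plus $N-1$ edges implies tree.'' The correct route is the one above: first use directed acyclicity to force every vertex to reach the unique out-degree-zero vertex $s$, deduce undirected connectivity from that, and only then conclude ``tree'' from the edge count. A secondary point worth stating explicitly is that $T$ and its complement are necessarily rooted at the same vertex, namely the unique sink $s$ of $G$.
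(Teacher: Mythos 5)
Your argument is correct, and it is worth noting that the paper itself gives no proof of this Fact --- it is merely attributed to a remark in the cited work of Maruyama et al. --- so there is no in-paper argument to compare against. Your chain of reasoning is the natural one and it holds up: every vertex of $G$ other than the super-sink $s$ has out-degree exactly two (one left, one right edge), so the root of any in-branching spanning tree must be $s$ (any other choice would force $s$ to have an out-edge), the complement $R$ has exactly one out-edge at every vertex $\neq s$ and $N-1$ edges in total, directed acyclicity of $G$ forces the unique out-path from any vertex in $R$ to terminate at the unique out-degree-zero vertex $s$, and connectivity plus the edge count then yields a tree. You correctly identified the one genuinely delicate step: one cannot jump from ``$R$ is a directed acyclic subgraph with $N-1$ edges'' to ``tree,'' since directed acyclicity does not preclude undirected cycles; routing the argument through reachability of $s$ first is the right fix. (An equivalent shortcut: an undirected cycle in $R$ would contain either a vertex of out-degree two within the cycle, contradicting the out-degree bound, or be a directed cycle, contradicting acyclicity.) Two small points you could make even more explicit: the edges of $G$ are labelled left/right, so parallel edges such as the two edges arising from $X\to aa$ are distinguishable and the multiset bookkeeping $|E(R)|=2(N-1)-(N-1)$ is legitimate; and the word ``ordered'' in the Fact only asks that some sibling order be fixed on the complement, which is immaterial here --- your appeal to the left/right labels does not literally order siblings (all edges of $T_R$ carry the same label), but any fixed order suffices for the enumeration argument the paper builds on this Fact.
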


The in-branching spanning tree consisting of the left edges (respectively the right edges) and their adjacent vertices is called the
{\em left tree} $T_L$ (respectively {\em right tree} $T_R$) of $G$.
Note that the source in $G$ is a leaf of both $T_L$ and $T_R$, and
the super-sink of $G$ is the root of both $T_L$ and $T_R$.
We shall call the operation of decomposing a DAG $G$ into two spanning trees $T_L$ and $T_R$ 
{\em spanning tree decomposition}. 
In Figure~\ref{fig:dag}, the source $x_5$ in $G$ is a leaf of both $T_L$ and $T_R$, and 
the super-sink $s$ in $G$ is the root of both $T_L$ and $T_R$.

Any ordered tree is an elements in ${\cal T}=\bigcup_{n\to\infty}{\cal T}_n$
where ${\cal T}_n$ is the set of all possible ordered trees with $n$ nodes.
As shown in~\cite{Asai2002,Zaki2002}, 
there exists an enumeration tree for ${\cal T}$ such that
any $T\in {\cal T}$ appears exactly once. 
The enumeration tree is defined by the {\em rightmost expansion}, i.e.,
in this enumeration tree, a node $T'\in {\cal T}_{n+1}$, 
which is a child of $T\in {\cal T}_n$, is obtained by adding a rightmost node to $T$.
In our problem, an ordered tree $T\in {\cal T}_{n+1}$ is 
identical to a left tree $T_L$ with $n+1$ nodes for $n=|\Sigma\cup V|$ symbols.

Let $G\oplus (u,v)$ be the DAG obtained by adding the edge $(u,v)$ to a DAG $G$.
If necessary, we write $G\oplus (u,v)_L$ to indicate that
$(u,v)$ is added as a left edge.
For a set $E$ of edges, the DAG $G\oplus E$ is defined analogously.
The DAG $G\oplus E$ is defined as adding all the edges $(u,v) \in E$ to $G$.
The DAG $G\ominus E$ is also defined as deleting all the edges $(u,v) \in E$ from $G$.

\begin{theorem}\label{th1}
The information-theoretic lower bound on the minimum number of bits needed to
represent an SLP with $n$ symbols is $2n + \log n! + o(n)$.
\end{theorem}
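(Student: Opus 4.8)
The plan is to obtain the bound by counting the number of distinct SLPs, i.e., $|{\cal DAG}(n)|$, from above and from below, and showing both sides agree up to a $2^{o(n)}$ factor. The key structural tool is the spanning tree decomposition already set up in the excerpt: a DAG $G\in{\cal DAG}(n)$ is determined by the pair $(T_L,T_R)$ of in-branching spanning trees on the same $n$ nodes, subject to the compatibility constraints inherited from Definition~\ref{SLP} (the source is a common leaf, the super-sink is the common root, and the topological order is consistent so that $i,j<k$ holds for every rule $X_k\to X_iX_j$). So the first step is to argue that choosing $G$ amounts to (a) choosing an ordered tree shape $T_L\in{\cal T}_{n+1}$ — which, by the rightmost-expansion enumeration cited from \cite{Asai2002,Zaki2002}, is exactly $|{\cal T}_{n+1}|$ many, i.e. a Catalan number $C_n = 2^{2n-o(n)}$, contributing the $2n$ term — and then (b) overlaying the second tree $T_R$, which after fixing the node set and the linear order essentially costs a permutation of the $n$ symbols, contributing the $\log n!$ term.

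First I would prove the upper bound $|{\cal DAG}(n)| \le 2^{2n+\log n! + o(n)}$. Encode $G$ by: the balanced-parenthesis encoding of $T_L$, which takes $2n+o(n)$ bits and, by the enumeration-tree fact, ranges over all ordered trees on $n+1$ nodes; followed by a description of where the right edges go. Given $T_L$ and hence the induced left-edge structure, each internal node $X_k$ already has its left child pinned down along $T_L$; specifying $T_R$ then reduces to specifying, for each internal node, its right child, but consistently with a single global topological order. I would show this residual choice is captured by a permutation of $[1,n]$ (assigning the ``right-child rank'' to each node), giving at most $n!$ possibilities, hence the claimed $2n+\log n!+o(n)$ bits. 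The cleanest way to see the $n!$ factor is: fix the left tree; the map sending $G$ to its right tree $T_R$ (viewed as a second ordered tree on the same labeled node set, with labels given by, say, the $T_L$-preorder) is injective, and the number of ordered trees on a labeled $n$-node set that are ``order-compatible'' is between $n!/2^{O(n)}$ and $n!\cdot 2^{O(n)}$.

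For the lower bound $|{\cal DAG}(n)| \ge 2^{2n+\log n!-o(n)}$ I would exhibit a family of that size. Take any of the $C_{n}=2^{2n-o(n)}$ ordered-tree shapes for $T_L$; independently, for ``most'' of these shapes one can route the right edges in $\Omega(n!/2^{o(n)})$ distinct ways while preserving acyclicity and the $i,j<k$ ordering condition — for instance by working with a near-balanced $T_L$ where many internal nodes have a large set of legal right-children, and counting the resulting systems of distinct representatives. Multiplying the two independent counts and taking logs yields $2n+\log n!-o(n)$. Combining the two directions, $\log|{\cal DAG}(n)| = 2n+\log n!+o(n)$, which is the information-theoretic lower bound for representing an SLP, establishing the theorem.

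The main obstacle I anticipate is handling the order-compatibility constraint ($1\le i,j<k\le|V|$ from Definition~\ref{SLP}) cleanly on \emph{both} sides simultaneously: it is a genuine restriction on which $(T_L,T_R)$ pairs are realizable as SLPs, so the naive ``$C_n$ choices for $T_L$ times $n!$ choices for $T_R$'' overcounts, while a family that is too constrained undercounts. The delicate part is showing the constraint only costs a $2^{o(n)}$ factor — i.e. that the number of admissible right-edge routings, maximized (or for a typical left tree) over left-tree shapes, is still $n!/2^{o(n)}$. I expect this to go through by a direct DAG-surgery / exchange argument of the $G\oplus(u,v)$ and $G\ominus E$ flavor introduced just before the theorem, but it is where the real work lies; everything else is bookkeeping on the balanced-parenthesis bound $2n+o(n)$ and Stirling's approximation for $\log n!$.
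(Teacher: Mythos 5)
Your overall strategy coincides with the paper's: decompose $G\in{\cal DAG}(n)$ into the left tree $T_L$ and the right tree $T_R$, charge $2n+o(n)$ bits to the Catalan-many choices of $T_L$, and argue that the residual choice of $T_R$ contributes a factor of roughly $n!$. The problem is that the entire content of the theorem lies in the step you defer: determining, for a fixed left tree, the number of admissible right trees to within a $2^{o(n)}$ factor of $n!$. You explicitly flag this as ``where the real work lies'' and offer only a plan (systems of distinct representatives for ``most'' or ``near-balanced'' left trees, an unspecified exchange argument); moreover, in your upper-bound paragraph you allow a $2^{O(n)}$ slack, which would be fatal, since a $2^{cn}$ factor with $c>0$ is not absorbed by the $o(n)$ term and would contaminate the leading $2n$. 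The paper closes exactly this gap with an exact count: it proves by induction on the rightmost expansion of ordered trees that $|{\cal S}(n,T)|=(n-1)!$ for \emph{every} ordered tree $T$ (not just typical or balanced ones), where ${\cal S}(n,T)$ is the set of single-source/single-sink DAGs whose left tree is $T$. The induction step shows that when the new rightmost leaf $u$ is attached, each of the $n$ candidate targets $x$ for the right edge $(u,x)$ yields exactly one admissible DAG, using the surgery $G\oplus(u,v)_L\oplus(u,x)_R\ominus(z,y)_R\oplus(z,u)_R$ to repair the right spanning tree whenever the naive insertion would violate acyclicity or the single-sink property. Without this (or some other $2^{o(n)}$-tight count), neither your upper bound nor your lower bound is established; what you have is a correct reduction of the theorem to its hardest lemma.

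Two secondary remarks. First, your worry about the ordering constraint $1\leq i,j<k$ from Definition~\ref{SLP} is largely a red herring: the paper counts SLPs up to the renaming permutation $\pi$, i.e., as unlabeled DAGs, and any acyclic DAG can be numbered along a topological order, so the binding constraints are acyclicity and the requirement that each edge class forms an in-branching spanning tree --- precisely what the surgery above handles. Second, the paper also controls the discrepancy between ${\cal DAG}(n)$ (sink in-degree exactly $2\sigma$) and the relaxed class ${\cal S}(n)$ via the sandwich $|{\cal S}(n)|/n^\sigma\leq|{\cal DAG}(n)|\leq|{\cal S}(n)|$, a polynomial factor your sketch does not mention but would need to account for.
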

\begin{proof}
Let ${\cal S}(n)$ be the set of all possible DAGs with $n$ nodes
and a single source/sink such that any internal node has exactly two children.
This ${\cal S}(n)$ is a super set of ${\cal DAG}(n)$ because
the in-degree of the sink of any DAG in ${\cal DAG}(n)$ must be
exactly $2\sigma$, whereas ${\cal S}(n)$ does not have such a restriction.
By the definition, 
$|{\cal S}(n)|/n^\sigma \leq 
|{\cal DAG}(n)| \leq 
|{\cal S}(n)|$ holds.

Let ${\cal S}(n,T)=\{G\in {\cal S}(n)\mid G=T\oplus T_R,\;T_R\in {\cal T}_n\}$.
We show $|{\cal S}(n,T)|=(n-1)!$ for each $T\in {\cal T}_n$
by induction on $n\geq 1$.
Since the base case $n=1$ is clear, we assume that the induction hypothesis 
is true for some $n\geq 1$.

Let $T'_L$ be the rightmost expansion of $T_L$ such that
the rightmost node $u$ is added as the rightmost child of node $v$ in $T_L$,
and let $G'\in {\cal S}(n+1,T'_L)$ with a left tree $T'_L$. 
By the induction hypothesis,
the number of $G\in {\cal S}(n,T_L)$ is $(n-1)!$
and $T_L$ is embedded into $G$ as the left tree.
Then, $G'$ is constructed by adding the left edge $(u,v)$
and a right edge $(u,x)$ for a node $x$ in $T_L$.

Let $s$ be the source of $G$.
For $v=s$, each $G'=G\oplus (u,v)_L\oplus (u,x)_R\in {\cal S}(n+1,T'_L)$ is admissible,
and the number of them is clearly $n|{\cal S}(n,T_L)|=n!$.
For $v\neq s$, if $x=s$, 
$G'=G\oplus (u,v)_L\oplus (u,x)_R\in {\cal S}(n+1,T'_L)$ is admissible.

Otherwise, there exists the lowest common ancestor $y$ of $s$ and $x$
on $T_R$ with $G=T_L\oplus T_R$.
Let $z$ be the unique child of $y$ and let $p(z',z)$ be the path of $T_R$ 
from $z'$ to $z$, where possibly $s=z$.
If the in-dgree of any node in $p(s,z)$ is at most one in $G$, 
we generate $G' = G\oplus (u,v)_L \oplus (u,x)_R \ominus (z,y)_R \oplus (z,u)_R$.
If $G'$ contains a cycle, it must contain the edge $(z,u)$.
However, there is no such a path because of the condition of $p(s,z)$.
Thus, $G'$ is an admissible DAG in ${\cal S}(n+1,T'_L)$.
Conversely, if some node in $p(s,z)$ is more than two in $G$, let $z'$ be the nearest one from $s$.
Analogously, $G' = G\oplus (u,v)_L \oplus (u,x)_R \ominus (z',z)_R \oplus (z',u)_R$
is an admissible DAG in ${\cal S}(n+1,T'_L)$.

In all the cases, the number of such $G'$s is also $n!$ because 
no edge is changed in $T_L$ and the pair $(T'_L,T'_R)$ 
containing the edge $(u,x)_R$ is unique for any fixed $T'_L$.
Thus, $|{\cal S}(n+1,T)|=n!$ is true for each $T\in {\cal T}_{n+1}$.

This result derives $|{\cal S}(n)|=C_n(n-1)!$
where $C_n=\frac{1}{n+1}{2n \choose n}\simeq 2^{2n}n^{-3/2}$
is the number of ordered trees with $n+1$ nodes.
Combining this with 
$|{\cal S}(n)|/n^\sigma \leq 
|{\cal G}(n)| \leq 
|{\cal S}(n)|$ as well,
we get the result that the information-theoretic minimum bits needed to represent
$G\in {\cal DAG}(n)$ is at least $2n + \log n! + o(n)$.
\hspace{\fill}$\Box$
\end{proof}

\subsection{An optimal SLP representation}
\begin{figure}[t]
\begin{center}
\includegraphics[width=0.7\textwidth]{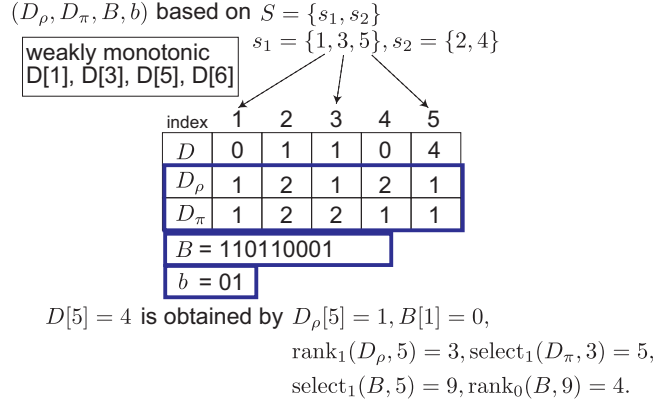}
\end{center}
\vspace{-.5cm}
\caption{
{\bf Encoded phrase dictionary:}
$D$ indicates the remaining sequence $D[2],D[4],\ldots, D[2n]$.
$D$ is encoded by $(D_\rho,D_\pi,{\bf B},{\bf b})$
based on a monotonic decomposition ${\cal S}$ of $D$, i.e.,
each $s\in{\cal S}$ indicates a weakly monotonic subsequence in $D$;
$D_\rho$ is the sequence of $i$ indicating the membership for some $s_i\in{\cal S}$,
$D_\pi$ is a permutation of $D_\rho$ with respect to 
the corresponding value in $D$,
${\bf B}$ is a binary encoding of the sorted $D$ in increasing order.
We show only the case that $D[i]$ is a member of an increasing $s\in{\cal S}$, but 
the other case is similarly computed by ${\bf b}$.
}
\label{dic_enc}
\end{figure}

We present an optimal reresentation of an SLP as an improvement of the data structure recently presented in \cite{Takabatake2012}.
We apply the spanning tree decomposition to the DAG $G$ of a given SLP, and obtain the DAG $T_L \oplus T_R(=G)$.
We rename the variables in $T_L$ by breadth-first order 
and also rename variables in $T_R$ according to the $T_L$.
Let $G^\prime$ be the resulting DAG from $G$. 
Then, for the array representation $D[1,2n]$ of $G'$, we obtain
the condition $D[1]\leq D[3]\leq \ldots \leq D[2n-1]$.
Since this monotonic sequence is encoded by $2n + o(n)$ bits,
$D$ is represented by $2n + n\log n + o(n)$ bits supporting
${\mbox access}(D,k)$ $(1\leq k\leq 2n)$ in $O(1)$ time.
We focus on the remaining sequence of length $n$, i.e., $D[2],D[4],\ldots,D[2n]$.
For simplicity, we write $D$ instead of $[D[2],D[4],\ldots,D[2n]]$. 

Let ${\cal S}=\{s_1,\ldots,s_\rho\}$ be a disjoint set of subsequences of $[1,n]$ such that 
any $i  \in \{1,2,...,n\}$ is contained in some $s_k$ and any $s_i,s_j$ $(i \neq j)$ are disjoint.
Such an ${\cal S}$ is called a decomposition of $D$.
A sequence $D[s_{k_1}],\ldots,D[s_{k_p}]$ is weakly monotonic 
if it is {\em increasing}, i.e., $D[s_{k_1}]\leq\ldots\leq D[s_{k_p}]$ or {\em decreasing}, i.e., $D[s_{k_1}]\geq\ldots\geq D[s_{k_p}]$.
In addition, ${\cal S}$ is called {\em monotonic} if the sequence $D[s_{k_1}],\ldots,D[s_{k_p}]$ is weakly monotonic 
for any $s_k=[s_{k_1},\ldots,s_{k_p}]\in {\cal S}$.

\begin{theorem}\label{th2}
Any SLP with $n$ symbols can be represented using $2n\log\rho(1 + o(1))$ bits
for $\rho\leq 2\sqrt{n}$, while supporting $O(\log\log\rho)$ access time. 
\end{theorem}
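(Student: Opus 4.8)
The plan is to realize the encoding $(D_\rho,D_\pi,{\bf B},{\bf b})$ sketched in Figure~\ref{dic_enc} and to bound its size and query time. After the spanning-tree decomposition and breadth-first renaming described above, the odd entries satisfy $D[1]\le D[3]\le\cdots\le D[2n-1]$, and this non-decreasing sequence over $[1,n]$ is kept in $2n+o(n)$ bits with $O(1)$ access by the usual gap-encoded bit vector. It then suffices to encode the length-$n$ sequence $D=[D[2],D[4],\ldots,D[2n]]$ over $[1,n]$ within $2n\log\rho(1+o(1))$ bits while supporting $\mathrm{access}(D,i)$ in $O(\log\log\rho)$ time. First I would compute, via the $O(n^{1.5})$-time algorithm of~\cite{Yehuda1998}, a monotonic subsequence decomposition ${\cal S}=\{s_1,\ldots,s_\rho\}$ of $D$ with $\rho\le 2\sqrt n$, and store a length-$\rho$ bit vector ${\bf b}$ recording for each $s_k$ whether it is increasing or decreasing.

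Next I would build the three remaining components. $D_\rho\in[1,\rho]^n$ is the membership sequence, $D_\rho[i]=k$ iff $i\in s_k$; it is stored in a wavelet tree over $[1,\rho]$ in the improved form of~\cite{Golynski2006,Jeremy2010}, using $n\log\rho(1+o(1))$ bits and answering $\mathrm{access}$ and $\mathrm{rank}$ in $O(\log\log\rho)$ time. $D_\pi\in[1,\rho]^n$ is obtained by listing the symbols of $D_\rho$ in non-decreasing order of the associated value $D[\cdot]$, where ties among indices carrying the same value are broken so that, within each subsequence, the induced order agrees with position order when $s_k$ is increasing and with reverse position order when $s_k$ is decreasing; this is consistent because the $s_k$ are disjoint. $D_\pi$ is kept in the same kind of wavelet tree, supporting $\mathrm{select}$ in $O(\log\log\rho)$ time. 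Finally ${\bf B}$ is the multiset $\{D[1],\ldots,D[n]\}$ written non-decreasingly and gap-encoded into a bit string of length at most $2n$ with $O(1)$ rank/select, i.e.\ $2n+o(n)$ bits, from which the $q$-th smallest value of $D$ is read off as $\mathrm{select}_1({\bf B},q)-q$.

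To answer $\mathrm{access}(D,i)$ I would set $k=\mathrm{access}(D_\rho,i)$ and $r=\mathrm{rank}_k(D_\rho,i)$, so that $i$ is the $r$-th position of $s_k$; then, writing $m_k=\mathrm{rank}_k(D_\rho,n)=|s_k|$, set $q=\mathrm{select}_k(D_\pi,r)$ if ${\bf b}[k]$ marks $s_k$ increasing and $q=\mathrm{select}_k(D_\pi,m_k-r+1)$ otherwise; finally return the $q$-th smallest value of $D$ via ${\bf B}$. Correctness rests on the observation that inside an increasing (resp.\ decreasing) subsequence the element at local position $r$ is its $r$-th smallest (resp.\ $(m_k-r+1)$-th smallest) element, so that by the chosen tie-breaking the relevant occurrence of $k$ in $D_\pi$ sits exactly at the global value-rank $q$ of $D[i]$, and ${\bf B}$ turns $q$ into the value itself. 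Each step is a constant number of wavelet-tree or bit-vector operations, for $O(\log\log\rho)$ total time.

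For the space, the two wavelet trees contribute $2n\log\rho(1+o(1))$ bits, while ${\bf B}$, ${\bf b}$, and the odd-entry encoding add only $O(n+\rho)$ further bits, which are lower order in the range of $\rho$ for which the decomposition-based encoding is profitable (for small $\rho$ the plain directly-addressable array already yields the $2n+n\log n$ term of the $\min$ mentioned in the introduction); this gives the claimed $2n\log\rho(1+o(1))$ bound. I expect the delicate point to be precisely the correctness of the $D_\pi$ mechanism under repeated values: one must pin down the tie-breaking so that the composition ``local rank in $s_k\ \to$ occurrence index in $D_\pi\ \to$ global value-rank $\ \to$ value in ${\bf B}$'' is well defined and genuinely inverts $\mathrm{access}$ even though the subsequences are only weakly monotonic; the remaining size and time accounting is then routine given the cited wavelet-tree results.
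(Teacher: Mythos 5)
Your proposal follows essentially the same route as the paper: the same four-tuple $(D_\rho,D_\pi,{\bf B},{\bf b})$ built from the monotonic decomposition of~\cite{Yehuda1998}, the same rank/select recovery of $D[p]$ through $D_\rho$, $D_\pi$, and the gap-encoded ${\bf B}$, and the same appeal to the improved wavelet trees for $O(\log\log\rho)$ time. Your explicit treatment of tie-breaking among equal values in the construction of $D_\pi$ is a point the paper leaves implicit, but it is a refinement of the same argument rather than a different one.
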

\begin{proof}
It is sufficient to prove that any $D$ of length $n$ 
can be represented using $2n\log\rho + o(n)$ bits for some $\rho\leq 2\sqrt{n}$.
By the result in~\cite{Yehuda1998},
we can construct a monotonic decomposition ${\cal S}$ of $D$ such that
$\rho = |{\cal S}| \leq 2\sqrt{n}$.

We represent the sequence $D$ as a four-tuple $(D_\rho,D_\pi,{\bf B},{\bf b})$ using ${\cal S}$.
For each $1\leq p\leq  n$, $D_\rho [p]=k$ iff 
$p$ is a member of $s_k\in {\cal S}$ for some $1\leq k\leq \rho$.
Let $(D[1],D_\rho[1]),\ldots,(D[n],D_\rho[n])$ be the sequence of pairs $(D[p],D_{\rho}[p])$ $(1 \leq p \leq n)$.
We sort these pairs with respect to the keys $D[p]$ $(1 \leq p \leq n)$ and 
obtain the sorted sequence $(D[\ell_1],D_\rho[\ell_1]),\ldots,(D[\ell_n],D_\rho[\ell_n])$. 
We define $D_\pi$ as the permutation $D_\rho[\ell_1]\cdots D_\rho[\ell_n]$. 

${\bf B}\in\{0,1\}^*$ is defined as the bit string
\[{\bf B}
=0^{D[\ell_1]}10^{D[\ell_2]-D[\ell_1]}\cdots1
0^{D[\ell_n]-D[\ell_{n-1}]}1.
\]

Finally, ${\bf b}[k]=0$ if $s_k\in {\cal S}$ is increasing and
${\bf b}[k]=1$ otherwise for $1\leq k\leq \rho$.
$D$ and $D_{\rho}$ are represented by WTs, respectively, and ${\bf B}$ is a rank/select dictionary.

We recover $D[p]$ using $(D_\rho,D_\pi,{\bf B},{\bf b})$.
When $D_\rho[p]=k$ and ${\bf b}[k]=0$, i.e., $D[p]$ is included in the 
$k$-th monotonic subsequence $s_k \in {\cal S}$ that is increasing, 
we obtain 
\[D[p] =
\mbox{rank}_0({\bf B},
\mbox{select}_1({\bf B},\ell)) 
\] 
by $\ell = \mbox{select}_k(
D_\pi,\mbox{rank}_k(D_\rho, p))$.
When $D_\rho[p]=k$ and ${\bf b}[k]=1$, we can similarly obtain $D[p]$ 
replacing $\ell$ by $r = \mbox{select}_k(D_\pi, 
(
\mbox{rank}_k(D_\rho,n)+1 - \mbox{rank}_k(D_\rho,p)
)
)$.

The total size of the data structure formed by $(D_\rho,D_\pi , {\bf B}, {\bf b})$
is at most $2n\log\rho(1+o(1))$ bits. 
The rank/select/access operations of the WT for a static sequence over $\rho\leq 2\sqrt{n}$ symbols 
can be improved to achieve $O(\log\log\rho)$ time for each query~\cite{Jeremy2010,Golynski2006}.
\hspace{\fill}$\Box$
\end{proof}

In Figure~\ref{dic_enc}, for the sequence $(0,1),(1,2),(1,1),(0,2),(4,1)$ of pairs $(D[p],D_{\rho}[p])$ $(1\leq p \leq 5)$, 
the sorted sequence is $(0,1),(0,2),(1,2),(1,1),(4,1)$.
Thus, $D_\pi$ is $12211$. 
${\bf B}=0^010^{(0-0)}10^{(1-0)}10^{(1-1)}10^{(4-1)}1=110110001$.
$b[1]=0$ because $s_1$ is increasing, and $b[2]=1$ because $s_2$ is decreasing.

\section{Data Structure for Reverse Dictionary}

In this section, we present a data structure for simulating 
the naming function $H$ defined as follows.
For a phrase dictionary $D$ with $n$ symbols,
\[
H(X_iX_j)=\left\{
\begin{array}{ll}
D^{-1}(X_iX_j), & \;\; \mbox{if} \;D[k]=X_iX_j \;\mbox{ for some } 1\leq k\leq n, \\
X_{n+1}, & \;\; \mbox{otherwise.}
\end{array}
\right.
\]

For a sufficiently large $V$, 
we set a total order on $(\Sigma\cup V)^2=\{ XY \mid
X,Y\in \Sigma\cup V\}$, i.e., 
the lexicographical order of the $n^2$ digrams.
This order is represented by the range $[1,n^2]$.
Then, we recursively define WT $T_D$ for a phrase dictionary $D$ 
partitioning $[1,n^2]$.
On the root node, the initial range $[1,n^2]$ is partitioned into two parts: a left range
$L[1,\lfloor(1+n^2)\rfloor/2]$ and a right range $R[\lfloor(1+n^2)\rfloor/2+1,n^2]$.
The root is the bit string ${\bf B}$ such that
${\bf B}[i]=0$ if $D[i]\in L$ and ${\bf B}[i]=1$ if $D[i]\in R$.
By this, the sequence of digrams, $D$, is decomposed into two subsequences $D_L$ and $D_R$;
they are projected on the roots of the left and right subtrees, respectively.
Each sub-range is recursively partitioned and 
the subsequence of $D$ on a node is further decomposed
with respect to the partitioning on the node.
This process is repeated until the length of any sub-range is one.
Let ${\bf B}_i$ be the bit string assigned to the $i$-th node
of $T_D$ in the breadth-first traversal.
In Figure~\ref{wavelet_dic}, we show an example
of such a data structure for a phrase dictionary $D$.

\begin{figure}[tb]
\begin{center}
\includegraphics[width=0.8\textwidth]{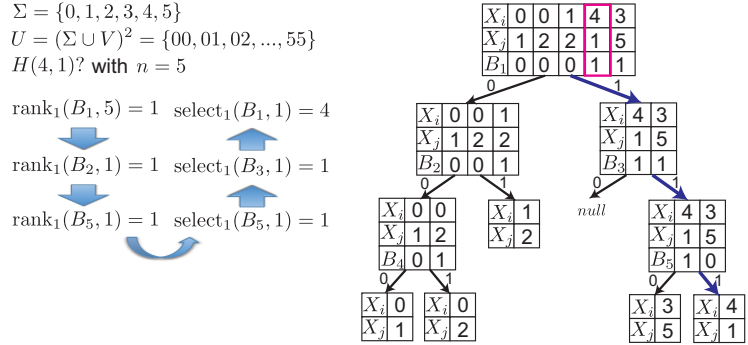}
\end{center}
\caption{
{\bf WT for reverse dictionary:}
The bit string ${\bf B}_i$ is assigned to 
the $i$-th node in breadth-first order.
For each internal node $i$,
we can move to the left child by $\mbox{rank}_0$
and to the right child by $\mbox{rank}_1$ on ${\bf B}_i$.
The upward traversal is simulated by $\mbox{select}_0$
and $\mbox{select}_1$ as shown.
The leaf for an existing digram is represented by $1$
and $null$ is represented by $0$, whereas
these bits are omitted in this figure.
}
\label{wavelet_dic}
\end{figure}

\begin{theorem}\label{wavelet-th}
The naming function for phrase dictionary $D$ over $n=|\Sigma\cup V|$ 
symbols can be computed by the proposed data structure $D_T$
in $O(\log n)$ time for any digram.
Moreover, when a digram does not exist in the current $D$,
$D_T$ can be updated in the same time and 
the space is at most $2n\log n(1+o(1))$ bits.
\end{theorem}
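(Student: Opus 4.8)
The plan is to analyze the wavelet-tree structure $D_T$ built over the range $[1,n^2]$ of digrams, following the construction described just before the statement, and to verify (a) the query bound, (b) the update bound, and (c) the space bound. For the query, given a digram $XY$, I would first compute its rank $q\in[1,n^2]$ in the lexicographic order of digrams; this takes $O(1)$ time from the codes of $X$ and $Y$. Then I descend from the root: at the node currently associated with sub-range $[a,b]$ and bit string ${\bf B}_i$, I decide whether $q$ falls in the left half or the right half, and move to the corresponding child using $\mbox{rank}_0$ or $\mbox{rank}_1$ on ${\bf B}_i$ to locate the position of the relevant subsequence entry; this is exactly the standard WT descent. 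Since the range has size $n^2$, the tree has depth $O(\log n^2)=O(\log n)$, and each level costs $O(1)$ for the rank on a rank/select dictionary, so reaching the leaf for $q$ costs $O(\log n)$ time. At the leaf I read the one bit that marks whether the digram is present; if it is $1$, I recover the index $k$ with $D[k]=XY$ by an upward traversal using $\mbox{select}_0/\mbox{select}_1$ on the ${\bf B}_i$'s along the root-to-leaf path (again $O(\log n)$ levels, $O(1)$ each), which gives $H(XY)=D^{-1}(XY)=X_k$; if it is $0$, I return $X_{n+1}$. This matches the definition of $H$.

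For the update, I would observe that when a new digram $XY$ is inserted (the ``otherwise'' case, so it is currently absent), we must (i) append the new entry to $D$ conceptually as position $n+1$, and (ii) flip the leaf bit for $q$ from $0$ to $1$, and correspondingly insert one symbol into the subsequence stored at every node on the root-to-leaf path of $q$, i.e. perform one insertion into each of $O(\log n)$ bit strings ${\bf B}_i$ at a prescribed position. The key point is that this only touches the $O(\log n)$ nodes on a single path: at the root the new entry goes to the left or right subsequence depending on which half $q$ lies in, and recursively. Using a dynamic rank/select dictionary on each ${\bf B}_i$ that supports insertion in $O(\log n)$ amortized (or worst-case) time and $O(1)$ rank/select per operation — or, more cheaply, noting that all $\mathbf{B}_i$'s at a fixed level concatenate to a single bit string of total length $O(n)$ so a dynamic bit-vector of length $O(n)$ per level suffices — the total update cost is $O(\log n)$ per level times $O(\log n)$ levels; I would instead aim for the claim as stated, $O(\log n)$ total, by amortizing and by the fact that the insert position within the whole level can be found incrementally during the descent, so that only the single dynamic bit-vector operation at each of the $O(\log n)$ levels is charged, each $O(1)$ amortized on a suitable dynamic bit-vector, giving $O(\log n)$.

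For the space bound I would argue as follows: the multiset of entries distributed across the nodes has total size exactly $n$ at every level of the tree (each of the $n$ entries of $D$ appears in exactly one node per level, plus the leaf-indicator bits total $O(n)$), and there are $O(\log n)$ levels, but a naive count gives $O(n\log n)$ \emph{machine words}, not bits. To get the sharper $2n\log n(1+o(1))$ \emph{bits} I would instead count the WT directly: a WT over an alphabet range of size $n^2$ on a sequence of length $n$ occupies $n\log(n^2)(1+o(1)) = 2n\log n(1+o(1))$ bits by the standard analysis cited in the preliminaries (Grossi et al.~\cite{Grossi03} and the improvements~\cite{Jeremy2010,Golynski2006}), where the $o(1)$ absorbs the $o(n\log n)$ overhead of the rank/select dictionaries and the dynamic-structure bookkeeping, and the leaf-indicator bits contribute only $O(n)$ more. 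I expect the \textbf{main obstacle} to be the update step: keeping the bound at $O(\log n)$ total time while supporting dynamic insertion into the per-level bit-vectors requires a dynamic rank/select structure whose amortized insert cost and redundancy are both controlled, and one must check that the $o(1)$ space overhead is preserved under a sequence of up to $n$ insertions (e.g. by periodic rebuilding). The query and static-space parts are routine once the WT machinery from Section~2 is invoked.
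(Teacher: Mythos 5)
Your query and space analyses follow the paper's own argument: $D_T$ is just a wavelet tree for $D$ viewed as a string of length $n$ over the alphabet $[1,n^2]$ of digram ranks, so the depth is $2\log n$, a query is a constant number of rank/select calls per level (descend to the leaf, then climb back with select), and the space is $n\log(n^2)(1+o(1))=2n\log n(1+o(1))$ bits. The gap is in the update step, which you yourself flag as the main obstacle but then resolve incorrectly. You model the update as an \emph{insertion at a prescribed position} into each ${\bf B}_{i_j}$ on the root-to-leaf path (or into a concatenated per-level bit vector), and you claim this costs $O(1)$ amortized per level on ``a suitable dynamic bit-vector.'' No such structure exists: a bit vector supporting rank together with insertions at arbitrary positions is subject to an $\Omega(\log n/\log\log n)$ lower bound per operation, so your route yields roughly $O(\log^2 n/\log\log n)$ update time, not the claimed $O(\log n)$.

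The missing idea is that a new digram is always appended as $D[n+1]$, i.e., at the \emph{end} of $D$. Because each node's subsequence preserves the order of elements of $D$, the new element becomes the last element of the subsequence at every node on its root-to-leaf path, so the corresponding bit is \emph{appended at the end} of each ${\bf B}_{i_j}$: the update is ${\bf B}_{i_j}\mapsto {\bf B}_{i_j}b$ with $b\in\{0,1\}$. Appending one bit to a rank/select dictionary is an $O(1)$ operation (extend the last block and increment its counters; select by binary search is unaffected), so the total update cost is $O(1)$ per level over the $2\log n$ levels of the traversed path, i.e., $O(\log n)$ — which is exactly how the paper argues it. With this observation in place, your remaining concern about preserving the $o(1)$ space overhead under up to $n$ appends is routine bookkeeping.
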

\begin{proof}
$D_T$ is regarded as a WT for 
a string $S$ of length $n$ such that any symbol is represented 
in $2\log n$ bits.
Thus, $H(XY)$ is obtained by $\mbox{select}_{XY}(S,1)$.
The query time is bounded by the number of
$\mbox{rank}$ and $\mbox{select}$ operations for bit strings
performed until the operation flow returns to the root.
Since the total range is $[1,n^2]$, i.e., the height of $T_D$ is at most $2\log n$,
the query time and the size are derived.
When $XY$ does not exist in $D$,
let $i_1,i_2,\ldots,i_k$ be the sequence of traversed nodes from the root $i_1$ to
a leaf $i_k$ and let ${\bf B}_{i_j}$ be the bit string on $i_j$. 
Given an access/rank/select dictionary for ${\bf B}_{i_j}$,
we can update it for ${\bf B}_{i_j}b$ and $b\in\{0,1\}$ in $O(1)$ time.
Therefore, the update time of $T_D$ for any digram is $O(k)=O(\log n)$.
\hspace{\fill}$\Box$
\end{proof}

\section{Discussion}

We have investigated three problems related to the construction of an SLP:
the information-theoretic lower bound for representing the phrase dictionary $D$,
an optimal representation of a directly addressable $D$,
and a dynamic data structure for $D^{-1}$.
Here, we consider the results of this study from the viewpoint of open questions.

For the first problem, we approximately estimated 
the size of a set of SLPs with $n$ symbols, which is
almost equal to the exact set.
This problem, however, has several variants, e.g.,
the set of SLPs with $n$ symbols deriving the same string,
which is quite difficult to estimate owing to the NP-hardness of the smallest CFG problem.
There is another variant obtained by a restriction:
Any two different variables do not derive the same digram, i.e.,
$Z\to XY$ and $Z'\to XY$ do not exist simultaneously for $Z\neq Z'$.
Although such variables are not prohibited in the definition of SLP,
they should be removed for space efficiency.
On the other hand, even if we assume this restriction,
the information-theoretic lower bound is never smaller than $\log n!$ bits because, 
given a directed chain of length $n$ as $T_L$,
we can easily construct $(n-1)!$ admissible DAGs.

For the second problem, we proposed almost optimal encoding of SLPs.
From the standpoint of massive data compression,
one drawback of the proposed encoding is that
the whole phrase dictionary must be stored in memory beforehand.
Since symbols must be sorted, we need a dynamic data structure
to allow the insertion of symbols in an array, e.g.,~\cite{Jansson2012}.
Such data structures, however, require $O(n\log n)$ bits of space.

For the last problem, the query time and update time of proposed 
data structure are both $O(\log n)$.
This cost is considerable and it is difficult to improve it
to $O(\log\log n)$ because $D$ is not static.
When focusing on the characteristics of SLPs, 
we can improve the query time probabilistically;
since any symbol $X$ appears in $D$ at least once and $|D|=2n$,
the average of frequency of $X$ is at most two.
Thus, using an additional array of size $n\log n$ bits,
we can check $H(XY)$ in $O(1)$ time with probability at least $1/2$.
However, improving this probability is not easy.
For this problem, achieving $O(1)$ amortized query time is also an interesting challenge.

\bibliographystyle{plain}

{\small
\bibliography{main.bib}
}
\end{document}